
\documentclass[conference]{IEEEtran}
\makeatletter
\def\ps@headings{%
\def\@oddhead{\mbox{}\scriptsize\rightmark \hfil \thepage}%
\def\@evenhead{\scriptsize\thepage \hfil \leftmark\mbox{}}%
\def\@oddfoot{}%
\def\@evenfoot{}}
\makeatother
\pagestyle{headings}

%
\ifCLASSINFOpdf
\else
\fi

\usepackage{psfrag}
\usepackage{amsfonts,amssymb,amsmath}
\usepackage{amsbsy}
\usepackage{mathrsfs}
\usepackage{dsfont}
\usepackage{graphicx}
\usepackage{url}
\usepackage{paralist}
\usepackage[usenames,dvipsnames]{color}
\usepackage{subfigure}
\usepackage{hyperref}
\usepackage{algorithm}

\newcommand \defn {\mathrel{\triangleq}}
\newcommand\set[1] {\left\{#1\right\}}

\renewcommand\Pr{\mathop{\mbox{\ensuremath{\mathbb{P}}}}\nolimits}
\newcommand\Ex{\mathop{\mbox{\ensuremath{\mathbb{E}}}}\nolimits}

\newcommand{\E}{$\mathcal{E}$}

\newcommand{\U}{\mathcal{U}}

\newcommand{\Lo}{\mathcal{L}}

\newcommand {\bxt}{\mathbf{x}_t}

\newcommand{\Q}{\mathcal{Q}}

\newcommand{\Ho}{H}

\newcommand \LQ {\ell_\mathcal{Q}}
\newcommand \GU {g_\mathcal{U}}

\newtheorem{assumption}{Assumption}{\bfseries}{\itshape}
\newtheorem{lemma}{Lemma}{\bfseries}{\itshape}
\newtheorem{definition}{Definition}{\bfseries}{\itshape}
\newtheorem{theorem}{Theorem}{\bfseries}{\itshape}

\newcommand \loss {L}

\newcommand\Naturals {{\mathds{N}}} 

\newcommand \HIPER {\textsc{HiPER}}


\begin{document}
%
\title{Near-Optimal Blacklisting}

\author{\IEEEauthorblockN{Christos Dimitrakakis}
\IEEEauthorblockA{EPFL \\Lausanne,
Switzerland\\
Email: christos.dimitrakakis@epfl.ch}
\and
\IEEEauthorblockN{Aikaterini Mitrokotsa}
\IEEEauthorblockA{Chalmers University of Technology\\
Gothenburg, Sweden\\
Email: mitrokatkm@gmail.com}
}


%


\maketitle

\begin{abstract}
Many applications involve agents sharing a resource, such as networks or services. When agents are honest, the system functions well and there is a net profit. Unfortunately, some agents may be malicious, but it may be hard to detect them. We consider the intrusion \emph{response} problem of how to permanently \emph{blacklist} agents, in order to maximise expected profit. This is not trivial, as blacklisting may erroneously expel honest agents. Conversely, while we gain information by allowing an agent to remain, we may incur a cost due to malicious behaviour. We present an efficient algorithm (\HIPER) for making near-optimal decisions for this problem. Additionally, we derive three algorithms by reducing the problem to a Markov decision process (MDP). Theoretically, we show that \HIPER{} is near-optimal. Experimentally, its performance is close to that of the full MDP solution, when the (stronger) requirements of the latter are met.
\end{abstract}


%
\IEEEpeerreviewmaketitle

\section{Introduction}

%

We consider the \emph{decision making} problem of blacklisting potentially malicious nodes or agents that share a resource or network based on partial information. As motivation, consider a communication network which is monitored by a network management system. Nodes can be of one of two types: malicious (e.g. dropping or corrupting packets, creating undue congestion), or honest. At each time-step (e.g. reporting period), we get a set of readings, giving some information about the behaviour of each node during that period. We want to optimally decide whether to blacklist a node, or maintain it in the system for one more time-step. 

In order to make the problem non-trivial, we assume that for every honest node in the network, we have some fixed tangible gain at each time period. This would be the case, if all participation was done through a subscription model as in internet service providers (ISPs). On the other hand, we incur a (hidden) cost for each malicious node that participates. Thus, it is in our interests to kick out malicious nodes as soon as possible, but never to expel honest ones. 

We should emphasize that this is \emph{not} an intrusion detection problem. In fact, the readings that we obtain for each node could be seen as the output of some intrusion detection system (IDS). Rather, we are more concerned about the decision making aspect: what is the \emph{optimal response} to the IDS outputs, given assumptions about the cost of malicious behaviour?

This setting of keeping suspicious nodes in the network until we become more certain about their type appears in many applications such as:
\begin{inparaenum}
\item blacklisting clients of an ISP
\item shutting down malware-infected hosts in an internal network
\item expelling selfish nodes from a peer-to-peer network. 
\end{inparaenum}
In all of the above cases, any single piece of information is not enough to condemn a node to blacklisting. Rather, a sufficient amount of statistics has to be collected before we are sure that removing a node is more beneficial than keeping it. In this paper, we propose and consider a number of algorithms for tackling this problem in a general setting.

More precisely, our first contribution is a decision-theoretic approach based on distribution-free high probability bounds. The bounds require very little prior information and can be used to trade off the cost of removing honest nodes with that of keeping malicious nodes in the network for too long. We prove that this High Probability Efficient Response algorithm (\HIPER{}) has low {\em worst-case expected loss} relative to an oracle which knows the type of every node.

Our second contribution is a set of Bayesian decision-theoretic approaches that we derive by formalising the problem as a Markov decision process (MDP). These require some further assumptions. In particular, it is necessary to fully specify a structure and prior parameters for the underlying statistical model. In addition, making optimal decisions according to such models is computationally intractable. Consequently, we consider some approximate algorithms. Of these, an optimistic approximation has similar performance to that of \HIPER{}, while a finite lookahead approximation has increased performance, at the cost of additional computation.

The paper is organised as follows. In the remainder of this section we give some background, present related work and our contributions. Section \ref{sec:prelim} introduces notation while Section \ref{sec:expectedloss} specifies the loss model. Section \ref{sec:hiper} presents the proposed \HIPER{} algorithm as well as the bounds on the {\em worst-case expected loss}. Section \ref{sec:POMDP} describes the decision-theoretic approaches which model the problem as an MDP and are used in the performance comparisons with the \HIPER{} algorithm
 while Section \ref{sec:experiments} describes the evaluation experiments.
Finally, Section \ref{sec:conclusion} concludes the paper. The appendix provides proofs of technical lemmas and some useful auxiliary results.

\subsection{Background}
\label{sec:related}

The problem we consider falls within decision theory. In particular, the scenario we investigate can be reduced to the {\em optimal stopping} problem~\cite{groot}, which can be modelled as an MDP ~\cite{groot} or as a (potentially unknown) partially observable MDP (POMDP)~\cite{POMDP}.

More precisely, in our setting, the nodes can be one of two types: honest or malicious. However, we initially start out without knowing what type each node is. Consequently, we must gather data (observations) to reduce our uncertainty about their types. Unfortunately, we can only do so while a node remains within the network. However, the longer we maintain a malicious node in the network, the more loss we incur. Conversely, once we remove an honest node, we will obtain no more profit from it. So, the problem can be reduced to deciding at what time, or under which conditions, to remove a given node from the network, if at all. Thus, our scenario can be seen as a type of {\em optimal stopping} problem.

The stopping problem has been extensively studied in general~\cite{groot}, while partial monitoring games in general have also received a lot of attention recently~\cite{CesaBianchi-Lugosi:PLG}. However, to the best of our knowledge, the general hidden reward stopping problem has not been previously studied in the literature. On the other hand, the specific application we consider can be seen as a type of \emph{optimal intrusion response}.

Most of the previous research on intrusion response has concentrated on the POMDP formalism. Indicative publications are those by Zonouz {\em et al.} \cite{zonouz}, Zan {\em et al.}, \cite{zan} and Zhang {\em et al.} \cite{zhang}, which have all proposed an intrusion response through modelling the process as a POMDP~\cite{POMDP}. More precisely, Zonouz {\em et al.} \cite{zonouz} proposed a Response and Recovery Engine (RRE) based on a game-theoretic response strategy against adversaries modelled as opponents in a non-zero-sum, two-player Stackelberg stochastic game. In each step of the game  RRE chooses the response actions using an approximate POMDP solver. More precisely, using the most likely state (MLS) \cite{MLS} approximation, the  POMDP is converted to a competitive  Markov Decision Process (MDP), which is then solved using a look ahead search (i.e. approximate planning).
Zhang {\em et al.} use the POMDP to integrate low level IDS alerts with high level system states, while Zan {\em et. al.} \cite{zan} propose to solve the intrusion response problem as a factored POMDP model. Additionally, they decompose the POMDP into small sub-POMDPs and compute the response policy using the MLS approximation technique.  However, in our case MLS as an approximation is too crude to be used, since it would essentially result in a completely random policy, as there are only two possible hidden states each node can be in. An entirely different approach, policy-gradient methods,
is employed by~\cite{dejmal2008reinforcement} in the context of combating denial-of-service attacks in P2P networks. However, this approach requires observing the rewards, which are in fact hidden in our case.

\subsection{Our contributions}
Our first proposed algorithm relies on bounds which do not require knowledge of prior probabilities regarding the type of a node (honest or malicious) neither known distributions for the observations corresponding to honest or malicious nodes. We only need to know the mean of each of these distributions. Consequently, it is substantially more lightweight than MDP solvers, since we take decisions without performing explicit planning. Thus, it is more suitable for resource constrained environments. We analyse the {\em expected loss} of this algorithm, and show that it is not significantly worse to that of an oracle which already knows each node's type.

Our second contribution is to derive three approximate MDP solvers. In contrast to previous work, in our scenario the reward is {\em never observed} by the algorithm.\footnote{Although of course the reward is used in the experiments to measure performance.} This corresponds to reality, since we frequently do not know which nodes give us negative rewards.\footnote{Conversely, if we could observe the rewards, it would be trivial to identify malicious nodes.} Furthermore, two of our MDP algorithms are different from those previously employed in the intrusion response literature, as we forego the most-likely-state approximation commonly used in POMDP approaches.  We first consider a myopic approximation.\footnote{This is equivalent to the most likely state approximation and to a sequential probability ratio test under some conditions.} The second approach is a lightweight {\em optimistic} approximation that performs no planning, which is derived from upper bounds~\cite{dimitrakakis-icaart2010} on Bayesian decision making in unknown MDPs~\cite{duff2002olc}. To our knowledge, this approach has not been used in similar problems before. Finally, we consider online planning with finite lookahead~\cite{RossPineau:OnlinePlanningPOMDPs:jmlr2008,groot}. This approach takes decisions which consider the impact of all our possible future actions up to some horizon. This approach has been employed in other applications such as dialogue modelling~\cite{bui2006tractable}, autonomous underwater vehicle mapping~\cite{saigol2009information}, preference elicitation~\cite{boutilier2002pomdp} and sensor scheduling~\cite{he2004sensor} in wireless sensor networks.

\section{Preliminaries}
\label{sec:prelim}
We consider a set of nodes, which can be either honest or malicious. We assume there is a reliable way to obtain statistics from each node, such as an IDS that gives us a numerical score for each node. We denote by $\Q$ the set of all malicious nodes and by $\U$ the set of all honest nodes. We consider that there is an entity \E\ (for instance an Internet Service Provider (ISP) or a network administrator) who gains some reward $\GU$ for each moment that an honest node remains in the network and has a cost $\LQ$ for each moment that a malicious node stays in the network. A node may be removed by \E\ at any time, for example through black-listing. However, re-inserting a removed node is not normally possible.

We use $N$ to denote the (possibly random) time at which \E\ removes a node from the network. In addition, any honest node may leave the network at some (random) time $\Ho$. Specifically, we assume that an honest node may decide to leave the network with some small probability $\lambda > 0$, independently over time. Then it holds that $\Ex[\Ho] = \frac {1}{\lambda}$. 
\begin{assumption}
  We assume that each node has a fixed type (i.e. honest or malicious) that is not changing over time. The type is hidden from \E{}.
  \label{as:samestate}
\end{assumption}
This does not mean that a node cannot \emph{behave} maliciously during one period and honestly the next: a node that drops packets on purpose, might not do so all the time. Of course, \E\ not only does not know the type of each node, but it also never observes the rewards obtained or the cost incurred.

At each time-step $t$ and for each node $i$, \E\ receives an \emph{information} signal $x_{i,t}\in [0,1]$, characterising the behaviour of that node $i$ within the time interval $t \in \Naturals$.
This signal can be seen as the output from some IDS, summarising  the behaviour of that node during that period.
\begin{assumption}
  We assume that $x_{i,1}, \ldots, x_{i,t}$ are independent,
(but not identically) distributed, random variables and:
\begin{equation}
\Ex[x_{i,t}\mid \Q]=q, \quad \Ex[x_{i,t}\mid \U]=u.
  \label{eq:meandistribution}
  \end{equation}
\end{assumption}
While the expected value is constant for all $t$,
the observed average of $\frac{1}{t} \sum_{k=1}^t x_{i,t}$ for each
node $i$ will initially be far from the expected value for small $t$.
The average, together with the total number of observations for each
node form a summary of the information received by each node. The relationship between these quantities will be looked at more closely in the analysis.

Finally, we place no specific meaning to $q$ and $u$ in this work, as they are application-dependent. In an \emph{intrusion response} scenario (e.g. \cite{costSensitive}), they could be considered as the {\em detection rate} (DR) and the {\em false alarm rate} (FA) correspondingly of an employed intrusion detection system. Then $x_{i,t}$ would correspond to alarm signals, with lower and high values for innocent-looking and suspicious behaviour respectively. Correspodingly, in a {\emph peer-to-peer} scenario (e.g. \cite{si2009distributed}), they could be fairness or reputation scores of each node.

In the remainder, we always refer to some arbitrary node in the network and thus make no distinction between nodes. This is because the algorithms that we examine, consider each node \emph{independently} of the others. Consequently, the following section analyses the expected loss for a single node of unknown type.

\section{The loss model}
\label{sec:expectedloss}
As previously mentioned, \E\ obtains a small gain for each time-step an honest node is within the network, and a small loss for each time-step a malicious node remains in the network. Formally, we can write that the total gain $G$ we obtain from some node $i$, which \E\ removes at time $N$, and which would voluntarily leave at time $\Ho$ is:
\begin{equation}
  \label{eq:gain1}
  G(i, \Ho, N) =
  \begin{cases}
    - N \LQ, & i \in \Q\\
    \min\{\Ho, N\} \GU, & i \in \U.
  \end{cases}
\end{equation}
\E\ wants to choose some node removal policy $\pi$ that maximises his total expected gain. That means that \E\ needs to keep as many as possible honest nodes in the network and eliminate the nodes that behave maliciously. In our analysis, we compare the expected gain of our policy $\pi$ with that of an {\em oracle}. The oracle always knows the type of each node (i.e. honest or  malicious), and thus, employs the optimal policy $\pi^*$. For $i \in \Q$,  according to the optimal policy $\pi^{*}$ it holds $N = 0$, while for $i \in \U$ according to the optimal policy $\pi^*$  it is $N = \infty$. Correspondingly,
\begin{equation}
  \label{eq:gain2}
  \Ex_{\pi^*} [G(i)] =
  \begin{cases}
    0, & i \in \Q\\
    \Ex[\Ho] \GU, & i \in \U.
  \end{cases}
\end{equation}
Let the loss $L$ be the difference between the gain of the optimal policy and our policy.  In particular, the {\em expected loss} of policy $\pi$ for a node of type $v$ is defined as:
\begin{equation}
  \Ex_\pi [L \mid v] = \Ex_{\pi^*(v)}[G\mid v] - \Ex_\pi[G\mid v],
  \label{eq:expected-loss1}
\end{equation}
where the $i$ subscript has been dropped for simplicity. The expected loss is 
 bounded by the {\em worst-case expected loss}:
\begin{equation}
  \Ex_\pi [L] \leq  \max_{v\in\{\Q,\U\}} \Ex_\pi [L \mid v],
  \label{eq:expected-loss2}
\end{equation}
which we wish to minimise. If \E\ removes node $i$ from the network at random time $N$, then he does not receive any more observations $x_{i,t}$  for this node from the IDS.
Thus, in essence, we want to find a {\em stopping rule}, that will let \E\ to determine the random time $N$ at which stopping occurs, i.e. \E\ takes the decision that $i\in\Q$ and removes it from the network.
We note that, $0\leq N\leq\infty$, where $N=\infty$ if stopping never occurs.

Since \E\ does not know if node $i$ is honest or malicious, it must collect a sufficient number of samples so as to only remove nodes for which it is reasonably certain that they are malicious. On the other hand, malicious nodes must be removed as soon as possible, since the operator incurs a cost for every moment they remain in the network. The first algorithm we consider uses simple statistics to make nearly optimal decisions about which nodes to keep.

\section{The \HIPER{} Algorithm}
\label{sec:hiper}
The algorithm, depicted in Alg.~\ref{fig:Algorithm}, uses the knowledge we have about malicious and honest nodes (see equation \ref{eq:meandistribution}). This is done by
calculating the average of all the observations generated by a node $i$ until time $t$:
\begin{equation}
  \theta_t \defn \frac{1}{t} \sum_{k=1}^{t} x_{i,k},
  \label{eq:average}
\end{equation}
and adding an appropriate {\em confidence interval} so that errors are
made with low probability.  Informally, \HIPER{} keeps nodes in the
network as long as the statistic $\theta_t$ is sufficiently far from
the expected statistic $q$ of malicious nodes. In order to avoid
throwing away honest nodes prematurely, it always keeps nodes for a
certain number of steps to obtain more reliable statistics. However,
as time passes, it needs more and more evidence to kick a node
out. Consequently, the probability that an honest node is thrown out
is bounded.


The analysis of the algorithm proceeds in three steps. First, we calculate the expected loss of the algorithm when faced with a node of malicious type. Then, we calculate the loss for honest nodes. Subsequently, we combine the two losses and tune the algorithm's input parameters to obtain an overall loss bound.

\begin{algorithm}
  \centering{
    \scalebox{1.0}{
      \mbox{
        \centering \tabcolsep=5pt
        \begin{tabular}{l}
          \textbf{Parameters:} $\delta,\Delta,q\in[0,1]$\\
          \textbf{Loop:} \textbf{For} each node $i$ in the network:\\
          \hspace{1.3cm}\textbf{For} each time-step $t$ do:\\
          \hspace{1.7cm}\textbf{if }$|\theta_t -q|< \sqrt{\frac{\ln(2/\delta)} {2t}}$ and $t>\frac{\ln(2/\delta)}{2\Delta^2}$ \textbf{then}\\
          %
          %
          \hspace{2cm} remove node $i$ from the network\\
          \hspace{1.7cm}\textbf{else} keep node $i$ in the network.\\
          \hspace{1.7cm}\textbf{end if}\\
          \hspace{1.3cm}\textbf{end For}\\
          \hspace{1cm}\textbf{end For}
        \end{tabular}
      }
    }
  }
  \caption{\HIPER{} Algorithm for Optimal Response}
  \label{fig:Algorithm}
\end{algorithm}

%

The first bound only depends upon the input parameter $\delta$, the error probability we wish to accept, and the loss $\LQ$ incurred by malicious nodes. We prove that the expected loss is polynomially bounded in terms of both $\delta$ and $\LQ$.
\begin{lemma}
  For Algorithm~\ref{fig:Algorithm}, with input parameter $\delta$, and $\Delta = |u - q|$, the {\em expected loss} when the node is malicious
is bounded as:
  \begin{equation}
    \Ex[L\mid \Q]\leq \frac {\LQ} {(1-\delta)^2} 
  \end{equation}
  \label{lem:lossattacker}
\end{lemma}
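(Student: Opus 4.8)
The plan is to first collapse the loss into an expected stopping time and then control that stopping time with a concentration-plus-geometric-tail argument. Starting from \eqref{eq:gain1}, \eqref{eq:gain2} and \eqref{eq:expected-loss1}, for a malicious node the oracle removes it instantly, so $\Ex_{\pi^*}[G\mid\Q]=0$, while \HIPER{} keeps it until its random removal time $N$, realising gain $-N\LQ$. Hence the expected loss reduces cleanly to
\[
  \Ex[L\mid\Q] = \LQ\,\Ex[N\mid\Q],
\]
and the entire lemma becomes the statement $\Ex[N\mid\Q]\le (1-\delta)^{-2}$. So from here on I would forget about gains and costs and argue purely about how long a truly malicious node can survive the removal test.

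The probabilistic engine is Hoeffding's inequality applied to the empirical mean $\theta_t$. The signals $x_{i,k}\in[0,1]$ are independent (Assumption~2 allows them to be non-identically distributed, but Hoeffding only needs boundedness and the common mean $q$ from \eqref{eq:meandistribution}), so choosing the confidence radius exactly as in Algorithm~\ref{fig:Algorithm} gives
\[
  \Pr\!\left(|\theta_t-q|\ge \sqrt{\tfrac{\ln(2/\delta)}{2t}}\ \Big|\ \Q\right)\le 2e^{-\ln(2/\delta)}=\delta .
\]
Thus at each step the ``bad'' event that \HIPER{} fails to remove a malicious node has probability at most $\delta$. I would then write $\{N=t\}$ as the event that the test failed at every earlier step and first succeeded at $t$, and bound the probability of surviving the earlier tests. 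If successive failures each contribute a factor at most $\delta$, this yields $\Pr(N=t\mid\Q)\le \delta^{t-1}$, and the tail sum
\[
  \Ex[N\mid\Q]=\sum_{t\ge1}t\,\Pr(N=t\mid\Q)\le\sum_{t\ge1}t\,\delta^{t-1}=\frac{1}{(1-\delta)^2}
\]
recovers the stated constant (the square arising precisely from $\sum_t t\delta^{t-1}$).

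The step I expect to be the real obstacle is the passage from the per-step bound $\Pr(\cdot)\le\delta$ to the product form $\Pr(N=t\mid\Q)\le\delta^{t-1}$: the failure events $\{|\theta_t-q|\ge\sqrt{\ln(2/\delta)/(2t)}\}$ are built from overlapping partial averages and are therefore \emph{not} independent, so the geometric decay is not automatic. Note that the crude single-step bound $\Pr(N>t)\le\delta$ is constant in $t$ and hence not summable, so genuine joint control of the survival probability is needed; I would try either a multiplicative supermartingale-style bound on survival or a restriction to a sparse, nearly-independent subsequence of checks to make the decay rigorous. A secondary point to verify is the minimum-time threshold $t>\ln(2/\delta)/(2\Delta^2)$, which forces $N$ to be at least that large; I would need to confirm that this additive floor is genuinely absorbed into the stated bound rather than appearing as an extra additive term.
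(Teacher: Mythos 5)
Your argument is the same as the paper's: reduce $\Ex[L\mid\Q]$ to $\LQ\,\Ex[N\mid\Q]$, apply Hoeffding with the algorithm's radius $\epsilon_t=\sqrt{\ln(2/\delta)/(2t)}$ to get a per-step failure probability of at most $\delta$, claim $\Pr(N=t\mid\Q)\le\delta^{t-1}$, and sum $\sum_t t\,\delta^{t-1}=(1-\delta)^{-2}$. Every step you actually carry out matches the paper's proof.

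The two points you flag as obstacles are precisely the points the paper's own proof does not address. The paper passes from $\sum_t \Pr(N=t\mid\Q)\,t\,\LQ$ to $\LQ\sum_t \delta^{t-1}t$ with no justification for the geometric decay; as you observe, the events $\{|\theta_s-q|\ge\epsilon_s\}$ for different $s$ are built from overlapping partial sums and are strongly dependent, and the only consequence of the per-step Hoeffding bound alone is $\Pr(N>t\mid\Q)\le\delta$ uniformly in $t$, which is not summable against $t$. Likewise, the paper silently drops the waiting-time condition $t>\ln(2/\delta)/(2\Delta^2)$, which makes $N$ exceed that threshold deterministically for every node, malicious ones included; this contributes an additive term of order $\LQ\ln(2/\delta)/(2\Delta^2)$ to $\Ex[L\mid\Q]$ that does not appear in the stated bound and can exceed $\LQ(1-\delta)^{-2}$ when $\Delta$ is small and $\delta$ is not close to $1$. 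So your proposal is as complete as the paper's proof, and your diagnosis of where the argument is genuinely incomplete is accurate; closing it would require exactly the kind of joint control of the survival probability (e.g., a blocking or supermartingale argument over nearly independent checkpoints) that you sketch, together with an explicit accounting of the minimum waiting time.
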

The proof of this lemma can be found in the appendix.  Naturally, the
expected loss is linearly dependent on the loss of keeping a malicious
node in the network, while the dependence on the error probability is
quadratic.

The second bound depends on the input parameter $\Delta$, which corresponds to how far we expect the statistics of honest nodes to be from $q$, the gain obtained by honest nodes $\GU$ and the leaving probability of honest nodes $\lambda$. Once more, we obtain a polynomial loss bound in terms of those variables.
\begin{lemma}
  If $\Delta = |u - q|$, then the {\em expected loss} when the node is honest is bounded by:
  \begin{equation}
    \Ex [L\mid \U] \leq \frac {\GU(\Delta^2+2)} {\lambda (\Delta^2+2\lambda)}.
  \end{equation}
  \label{lem:lossuser}
\end{lemma}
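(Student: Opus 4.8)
The plan is to first reduce the honest-node loss to a quantity involving only the removal time $N$ of \HIPER{} and the voluntary departure time $\Ho$, and then to bound that quantity through a per-step error probability. Starting from \eqref{eq:expected-loss1} together with \eqref{eq:gain1}--\eqref{eq:gain2}, the oracle keeps an honest node until it leaves, so $\Ex_{\pi^*}[G\mid\U]=\Ex[\Ho]\GU$, whereas \HIPER{} collects gain only up to $\min\{\Ho,N\}$; hence $\Ex[L\mid\U]=\GU\,\Ex[(\Ho-N)^+]$, i.e. the cost of the extra steps the node would have stayed had we not expelled it. Since $\Ho$ is geometric with rate $\lambda$ and is independent of the observation-driven stopping time $N$, I would condition on $\{N=t\}$ and invoke memorylessness, $\Ex[(\Ho-t)^+]=(1-\lambda)^t/\lambda$, to obtain the clean identity $\Ex[L\mid\U]=\frac{\GU}{\lambda}\,\Ex[(1-\lambda)^N]$.

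Next I would control $\Ex[(1-\lambda)^N]$ via the probability that \HIPER{} wrongly fires on an honest node at a given step. Writing $R_t$ for the event that the removal test holds at time $t$ (so $N=\min\{t:R_t\}$), the inequality $(1-\lambda)^N\le\sum_t (1-\lambda)^t\,\mathbb{1}\{R_t\}$ holds because the summand at $t=N$ alone already equals $(1-\lambda)^N$; taking expectations and using independence of $\Ho$ and the observations gives $\Ex[(1-\lambda)^N]\le\sum_t (1-\lambda)^t\,\Pr(R_t)$. Here the two guards of the algorithm enter: a removal can occur only for $t>\frac{\ln(2/\delta)}{2\Delta^2}$, where the radius $\epsilon_t=\sqrt{\ln(2/\delta)/(2t)}$ satisfies $\epsilon_t<\Delta$, and it requires $|\theta_t-q|<\epsilon_t$, which by the triangle inequality and $|u-q|=\Delta$ forces the honest empirical mean to deviate as $|\theta_t-u|>\Delta-\epsilon_t$. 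A concentration bound for the average of the independent $[0,1]$-valued signals (Hoeffding, or Chebyshev via $\mathrm{Var}(\theta_t)\le 1/(4t)$) then yields a per-step estimate of the form $\Pr(R_t)\le e^{-\Theta(\Delta^2 t)}$.

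Finally I would evaluate the resulting series. Bounding $(1-\lambda)^t\le e^{-\lambda t}$ and combining with the per-step estimate reduces $\sum_t (1-\lambda)^t\Pr(R_t)$ to a geometric sum that I would compare to $\int_0^\infty e^{-(\Delta^2/2+\lambda)t}\,dt=\frac{2}{\Delta^2+2\lambda}$; the factor $\Delta^2+2\lambda$ in the denominator of the claimed bound comes exactly from this combined rate, while the numerator $\Delta^2+2$ absorbs the transient contribution of the early steps near the cutoff together with the crude constants from the union and concentration steps. I expect the main obstacle to be precisely this last step: keeping the time-varying radius $\epsilon_t$ and the cutoff $\frac{\ln(2/\delta)}{2\Delta^2}$ under control while still collapsing the sum into a closed form with the specific constants $\Delta^2+2$ and $\Delta^2+2\lambda$. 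It is worth noting that the stated bound is deliberately loose --- it exceeds the trivial estimate $\Ex[L\mid\U]\le\GU/\lambda$ obtained from $(1-\lambda)^N\le 1$ --- so the real content is not tightness but producing an expression in $\Delta$ and $\lambda$ that can be traded off against the malicious-node bound of Lemma~\ref{lem:lossattacker} when $\delta$ and $\Delta$ are tuned.
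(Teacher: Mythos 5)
Your plan follows the paper's proof essentially step for step: the conditional identity $\Ex[L\mid \U, N=n]=\GU(1-\lambda)^n/\lambda$ (which the paper obtains by explicit summation over the geometric law of $\Ho$ rather than by memorylessness), the per-step Hoeffding estimate $\Pr(R_t)\leq \exp\left(-2t(\Delta-\epsilon_t)^2\right)$ coming from the triangle inequality applied to $|\theta_t-q|<\epsilon_t$ with $|u-q|=\Delta$, and the final comparison of the resulting series with a geometric series of ratio $e^{-\Delta^2/2}(1-\lambda)$, which via $e^{-x}\leq 1/(1+x)$ produces exactly the constants $\Delta^2+2$ and $\Delta^2+2\lambda$. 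Your side observation that the stated bound is weaker than the trivial estimate $\GU/\lambda$ is correct (and applies equally to the paper's version), but since your actual argument is the same Hoeffding-plus-geometric-series route, the proposal matches the paper's proof in substance.
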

The proof of this lemma can be found in the appendix.
Similarly to the previous lemma, there is a linear dependence on the loss that is incurred when we erroneously remove an honest node, and a quadratic dependence on the rate of departure. In addition, there is a weak dependence on the gap $\Delta$ between the two means.

Finally, we can combine everything in one bound by selecting a value for $\delta$ that depends on $\Delta$ and which simultaneously makes the bounds tight:
\begin{theorem} Set $\Delta = |u - q|$ and select:
  \begin{equation}
    \delta= 1- \sqrt{\frac  {\LQ\lambda(\Delta^2 +2\lambda)} {\GU(\Delta^2+2)}}
    \label{eq:delta}
  \end{equation}
  then the expected loss $\Ex \loss$ is bounded by:
  \begin{equation}
    \Ex (\loss) \leq \Lo_1 \defn  \frac {\GU(\Delta^2+2)} {\lambda (\Delta^2+2\lambda)} .
    \label{eq:threshold-loss-bound}
  \end{equation}
  \label{theorem:hiper}
\end{theorem}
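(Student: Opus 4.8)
The plan is to combine the two conditional bounds of Lemma~\ref{lem:lossattacker} and Lemma~\ref{lem:lossuser} through the worst-case inequality (\ref{eq:expected-loss2}), choosing $\delta$ precisely so that the malicious-node bound is pushed down to the level of the (already $\delta$-free) honest-node bound. Concretely, I would start from $\Ex_\pi[\loss] \le \max_{v\in\{\Q,\U\}} \Ex_\pi[\loss\mid v]$, so that it suffices to show each of the two conditional losses is at most $\Lo_1$.

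For the honest case this is immediate: Lemma~\ref{lem:lossuser} already gives $\Ex[\loss\mid\U] \le \frac{\GU(\Delta^2+2)}{\lambda(\Delta^2+2\lambda)} = \Lo_1$, with no dependence on $\delta$. The honest bound is therefore the binding constraint, and the role of $\delta$ is only to guarantee that the malicious bound does not exceed it. For the malicious case I would substitute the choice (\ref{eq:delta}) into Lemma~\ref{lem:lossattacker}. Squaring $1-\delta = \sqrt{\LQ\lambda(\Delta^2+2\lambda)/(\GU(\Delta^2+2))}$ gives $(1-\delta)^2 = \LQ\lambda(\Delta^2+2\lambda)/(\GU(\Delta^2+2))$, and dividing $\LQ$ by this quantity cancels the $\LQ$ factor and inverts the remaining ratio, yielding exactly $\Ex[\loss\mid\Q] \le \LQ/(1-\delta)^2 = \Lo_1$. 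With both conditional losses at most $\Lo_1$, the maximum is $\Lo_1$ and (\ref{eq:threshold-loss-bound}) follows. This substitution is the one computation to carry out, but it is routine; the content is simply that this $\delta$ is the value at which the malicious bound (increasing in $\delta$ on $[0,1)$) meets the fixed honest bound, so that both are simultaneously tight.

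The one genuine subtlety — and the step I would treat most carefully — is the admissibility of the chosen $\delta$ as an error probability: for (\ref{eq:delta}) to satisfy $\delta\in[0,1]$ the radicand must lie in $[0,1]$, i.e.\ $\LQ\lambda(\Delta^2+2\lambda) \le \GU(\Delta^2+2)$. This is a feasibility condition on the parameters (essentially that the per-step malicious cost weighted by the departure rate $\lambda$ is not too large relative to the honest gain $\GU$); when it holds the balancing choice is valid and the argument closes, and I would state it explicitly as the regime in which the theorem applies. I would also remark that any smaller $\delta$ keeps the malicious bound strictly below $\Lo_1$ while leaving the honest bound unchanged, so $\Lo_1$ remains an upper bound throughout $0\le\delta\le\delta^{*}$; the stated value is simply the largest admissible $\delta$, at which the two conditional bounds coincide.
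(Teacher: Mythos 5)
Your proof is correct and follows essentially the same route as the paper's: substitute the chosen $\delta$ into the bound of Lemma~\ref{lem:lossattacker} so that it collapses to $\Lo_1$, observe that Lemma~\ref{lem:lossuser} already gives $\Lo_1$ for the honest case, and conclude via the worst-case inequality \eqref{eq:expected-loss2}. Your additional remark on the admissibility condition $\LQ\lambda(\Delta^2+2\lambda)\leq \GU(\Delta^2+2)$ ensuring $\delta\in[0,1]$ is a worthwhile point of care that the paper's own proof leaves implicit.
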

\begin{proof}
  If we substitute (\ref{eq:delta}) in (\ref{eq:expected-loss1}) we get:
  \begin{equation}
    \Ex[L| \Q] \leq \frac{\LQ} {(1-\delta)^2} = \frac{\LQ} {\frac  {\LQ\lambda(\Delta^2 +2\lambda)} {\GU(\Delta^2+2)}} =\frac {\GU(\Delta^2+2)} {\lambda (\Delta^2+2\lambda)}.
  \end{equation}
  Thus, using \eqref{eq:expected-loss2} and Lemmas \ref{lem:lossattacker} and \ref{lem:lossuser} we get:
  \[
  \Ex (\loss) \leq \frac {\GU(\Delta^2+2)} {\lambda (\Delta^2+2\lambda)}
  \]
\end{proof}
This theorem shows that the performance of \HIPER{} only very weakly depends on the gap $\Delta$ between honest and malicious nodes. In addition, it is optimal up to a polynomial factor. 

\section{Markov decision Process approximations}
\label{sec:POMDP}

As mentioned in the introduction, our setting corresponds to an optimal stopping problems. As these can be modelled as MDPs~\cite{groot}, it may be useful to solve the problem directly using the MDP formalism.

To cast our problem in this setting, we need to specify: a) the {\em prior} probability for each node being {\em honest} or {\em malicious}; b) a {\em known} distribution family for the observation distribution, conditioned on whether the node under consideration is honest or malicious; c) a planning algorithm for calculating our responses. This can be quite demanding computationally, as the solution to the problem requires planning in a large tree. However, they can result in much better performance.

\subsection{Intrusion Response and POMDP}

A  Partially Observable Markov Decision Process (POMDP) \cite{POMDP} is a generalisation of a Markov Decision Process (MDP).
More precisely, a POMDP models the relationship between an agent and its environment  when the agent cannot directly observe the underlying state. A POMDP can be described as a tuple $<S, A, O, T, \Omega, R>$ where $S$ is a finite set of states, $A$ is a set of possible actions, $O$ is a set of possible observations, $T$ is a set of conditional transition probabilities and $\Omega$ is a set of conditional observation probabilities and $R: A, S\rightarrow \mathbb{R}$.

We can model our intrusion response problem as a POMDP if we consider that a node of the network at each time-step $t$ has a state $s_{t}\in S$ with $s_{t}=(v_{t},c_{t})$ where $v_{t}\in\{0,1\}$ and $c_{t}\in\{0,1\}$ such that:
\begin{equation*}
  v_{t}=
  \begin{cases}
    0, & \text{ if the node is honest},\\
    1, & \text{ if the node is malicious.}
  \end{cases}
\end{equation*}
\begin{equation*}
  c_{t}=
  \begin{cases}
    0, & \text{ if the node is in the network},\\
    1, & \text{ if the node is out of the network.}
  \end{cases}
\end{equation*} 
where it holds that $\Pr(v_{t+1}=v_{t})=1$ since $v_{t}$ is stationary (i.e. a malicious node is always malicious and an honest node remains honest)  based on Assumption \ref{as:samestate}.

Additionally, at each time-step $t$, \E\  can perform an action $a_{t}\in\{0,1\}$ such that:
\begin{equation*}
  a_{t}=
  \begin{cases}
    0, & \text{ if \E\ keeps the node in the network, }\\
    1, &\text{ if \E\ removes the node from the network.}
  \end{cases}
\end{equation*}
Furthermore, the following independence condition holds:
 $\Pr(v_{t+1}\mid v_{t},c_{t},a_{t})=\Pr(v_{t+1}\mid v_{t})$
since the type of a node (i.e. malicious or honest) does not depend on \E's action (i.e. remove from the network or not) neither on whether the node is in the network or out. In addition, since the type of a node never changes, it holds:
\begin{equation}
  \Pr(v_{t+1} = j \mid v_{t} = j) = 1.
  \label{eq:state1}
\end{equation}
Consequently, we remove the time subscript from $v$ in the sequel.
On the other hand the probability that a node will be in the network depends on if it is already in or out and the action that \E\ will take:
\begin{equation}
  \Pr(c_{t+1}\mid c_{t}, v, a_{t})=\Pr(c_{t+1}\mid c_{t}, a_{t})
  \label{eq:state2}
\end{equation}
From equations (\ref{eq:state1}) and (\ref{eq:state2}), it is evident that the POMDP under consideration is factored.  

To fully specify the model we must assume some probability distribution for the observations. Specifically, we model $x_t$ as drawn from a Bernoulli distribution\footnote{This distribution is particularly convenient for computational reasons, because closed-form Bayesian inference can be performed via the Beta conjugate prior~\cite{groot}. However, in principle it can be replaced with any other distribution family, without affecting the overall formalism.} with parameters $u$ and $q$ for honest and malicious nodes respectively:
$\Pr(x_{t}=1\mid v=0)=u \quad{\text  and } \quad \Pr(x_{t}=1\mid v=1)=q.$
Let $\bxt \defn (x_1, \ldots, x_t)$ be a $t$-length sequence of observations.
From Bayes' theorem, we obtain an expression for our {\em belief} at time $t$:
\begin{equation}
\Pr(v=j \mid \bxt)=\frac{\Pr(\bxt \mid v=j)\Pr(v=j)} {\sum_{i=0}^{1}\Pr(\bxt \mid v=i)\Pr(v=i)}
\label{eq:bayes-theorem}
\end{equation}
where $j\in\{0,1\}$.
Thus, the expected gain at time $t$ if \E\ decides to keep a node in the network is:
$\Ex[G_{t}\mid c_{t}=0, \bxt]=\Pr(v=0\mid \bxt)\cdot \GU- \Pr(v=1\mid \bxt)\cdot \LQ$
while the expected gain if \E\ decides to remove the node from the network is always:
$\Ex[G_{t}\mid c_{t}=1]=0.$
The problem is to find a policy $\pi : X^* \to A$, mapping from the set of all possible sequences of observations to actions, maximising the total  expected gain:
\begin{equation}
\Ex_\pi(G) = \Ex_\pi\left(\sum_{t=1}^\infty G_t\right).
\label{eq:expected-bayes-gain}
\end{equation}
Since future gains depend on any future observations we might obtain, the exact calculation requires enumerating all possible future observations. Consequently, the exact solution to the problem is intractable~\cite{groot,duff2002olc,dimitrakakis-icaart2010}. In the next section we describe possible approximations to this problem.

\subsection{POMDP algorithms}
We consider three algorithms:  a) A {\em myopic} algorithm, which only considers the expected gain at the current time-step; b) An {\em optimistic } algorithm, which computes an upper bound on the total expected gain; c) A {\em finite lookahead} algorithm, which performs complete planning up to some fixed finite depth. While these algorithms have appeared before in the general MDP literature, they have not been applied before to intrusion response problems. We do not consider the most likely state approximation (MLS), since in our case there are only two possible hidden states for a node, thus, rendering the approximation far too coarse for it to be effective.

\subsubsection{Myopic} In this case, \E\ only considers the expected gain for the next time-step when taking a decision. Consequently, \E\ keeps  the node in the network if:
 $\Ex[ G_{t}\mid a_{t}=0] > \Ex[ G_{t}\mid a_{t}=1].$
This algorithm is the closest to the MLS approximation among the ones considered. In fact, it is easy to see that it would be identical to MLS, as well as to a sequential probability ratio test, when $\LQ = \GU$.
\subsubsection{Optimistic} 
This rule constructs an upper bound on the value of the decision to keep a node in the network, which is based on Proposition~1 in~\cite{dimitrakakis-icaart2010}. Informally, this is done by assuming that the true type of the node will be revealed at the next time-step. Then \E\ keeps the node in the network if and only if:
 $ \Pr(v_{t}=0\mid \bxt)\cdot \GU / \lambda > \Pr(v_{t}=1\mid \bxt)\cdot \LQ.$
Intuitively, if the node is revealed to be malicious, then we can remove it at the next step and consequently we only lose $\LQ$. In the converse case, we can keep it for an expected $1/\lambda$ steps.

\subsubsection{Finite lookahead} 
The finite lookahead algorithm performs backwards induction~\cite{groot} up to some finite depth $T$, at every time-step. More precisely, any sequence of observations $\bxt = (x_1, \ldots, x_t)$ results in a posterior probability $\Pr(v_t \mid \bxt)$. Let: 
  $V_t \defn \sum_{k=t}^\infty G_k$
be the total gain starting from time-step $t$.  Then, the expected gain
under the optimal policy is determined recursively as follows:
\begin{align*}
  \label{eq:lookahead}
  \Ex(V_t \mid \bxt)
  &= 
  \max
  \{
  0,
  \Ex(G_t \mid \bxt, a_t = 0) + \Ex(V_{t+1} \mid \bxt)
  \}
  \\
  \Ex(V_{t+1} \mid \bxt)
  &=
  p_t \Ex(G_t \mid \bxt, x_{t+1} = 1)+\nonumber \\
  &(1 - p_t) \Ex(V_{t+1} \mid \bxt, x_{t+1} = 0)
\end{align*}
where $p_t \defn \Pr(x_{t+1} = 1 \mid \bxt) =  \sum_{i=0}^1 \Pr(x_{t+1} = 1 \mid v = i)  \Pr(v = i \mid \bxt)$ is the marginal posterior probability that $x_{t+1} = 1$. For more details on this backwards induction algorithm, the reader is urged to consult~\cite{groot,duff2002olc}.

\section{Experimental Evaluation}
\label{sec:experiments}
We perform three sets of experiments. The first set investigates the performance of \HIPER{} with various choices of the parameter $\delta$, including the optimal choice suggested by Theorem~\ref{theorem:hiper}. The second set compares \HIPER{} with the {\em myopic} and {\em optimistic} approximations. In the final set of experiments, we compare the {\em optimistic} with the {\em finite lookahead}  approximation. In all cases, we collected results from $10^4$ runs, with $100$ nodes in each simulation, and we plot a moving average of the {\em expected loss} as various network parameters change.
\begin{figure*}[ht]
  \centering
  \subfigure[Horizon]{
    \psfrag{regret}[B][B][1][0]{$\Ex L$}
    \psfrag{d1}[B][B][1][0]{$\delta_1$}
    \psfrag{d2}[B][B][1][0]{$\delta_2$}
    \psfrag{d3}[B][B][1][0]{$\delta_3$}
    \psfrag{dr}[B][B][1][0]{$\delta^*$}
    \psfrag{horizon}[B][B][1][0]{$\Ho$}
    \includegraphics[width=0.45\textwidth]{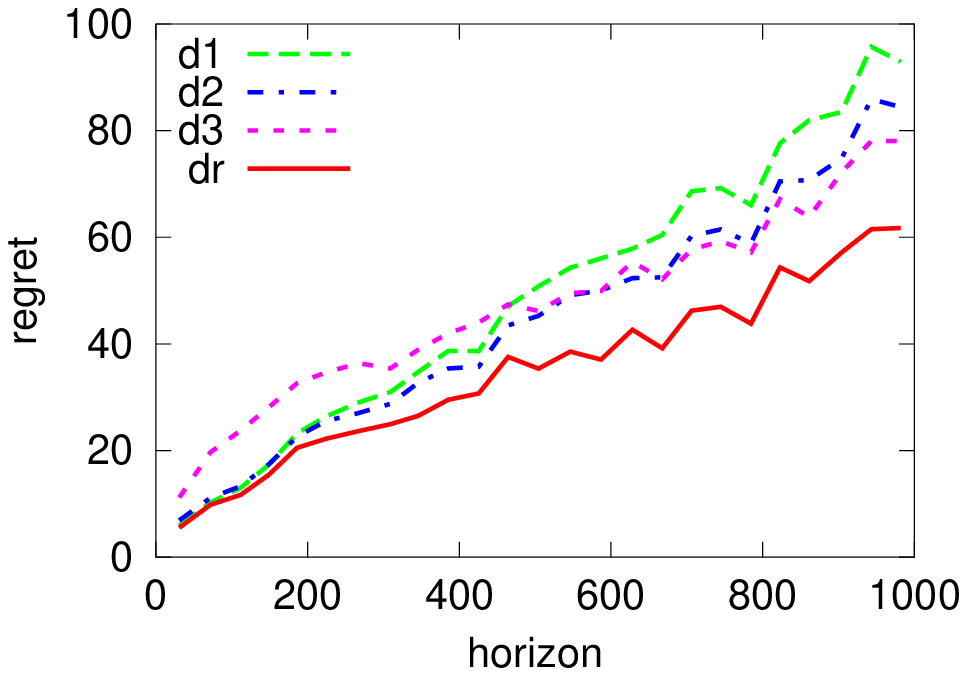}
  }
  \subfigure[Gap]{
    \psfrag{regret}[B][B][1][0]{$\Ex L$}
    \psfrag{d1}[B][B][1][0]{$\delta_1$}
    \psfrag{d2}[B][B][1][0]{$\delta_2$}
    \psfrag{d3}[B][B][1][0]{$\delta_3$}
    \psfrag{dr}[B][B][1][0]{$\delta^*$}
    \psfrag{delta}[B][B][1][0]{$\Delta$}
    \includegraphics[width=0.45\textwidth]{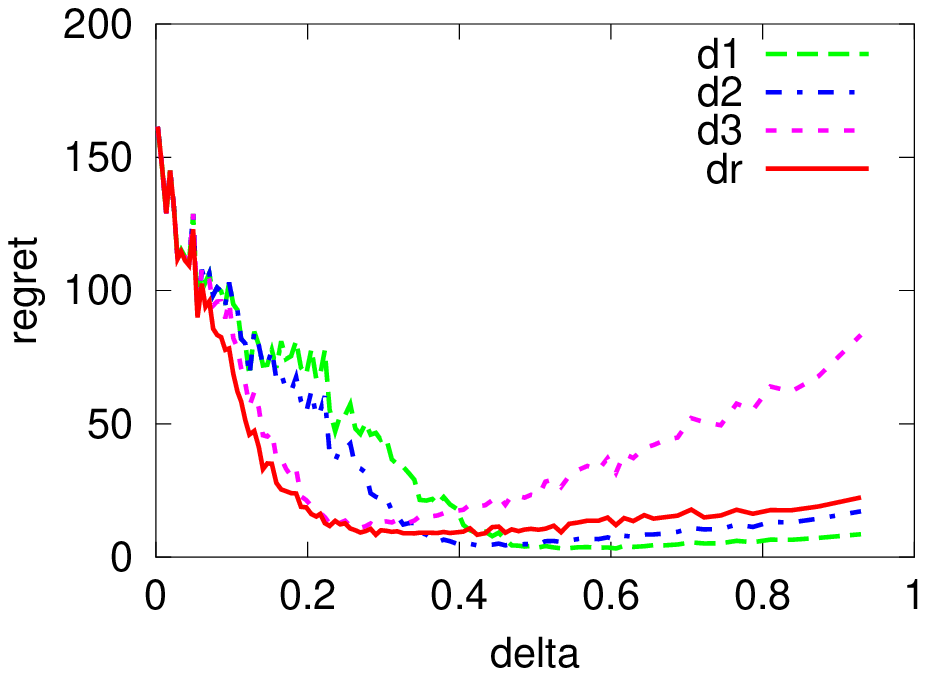}
  }
  \\
  \subfigure[Malicious]{
    \psfrag{regret}[B][B][1][0]{$\Ex L$}
    \psfrag{d1}[B][B][1][0]{$\delta_1$}
    \psfrag{d2}[B][B][1][0]{$\delta_2$}
    \psfrag{d3}[B][B][1][0]{$\delta_3$}
    \psfrag{dr}[B][B][1][0]{$\delta^*$}
    \psfrag{malicious}[B][B][1][0]{proportion of malicious nodes}
    \includegraphics[width=0.45\textwidth]{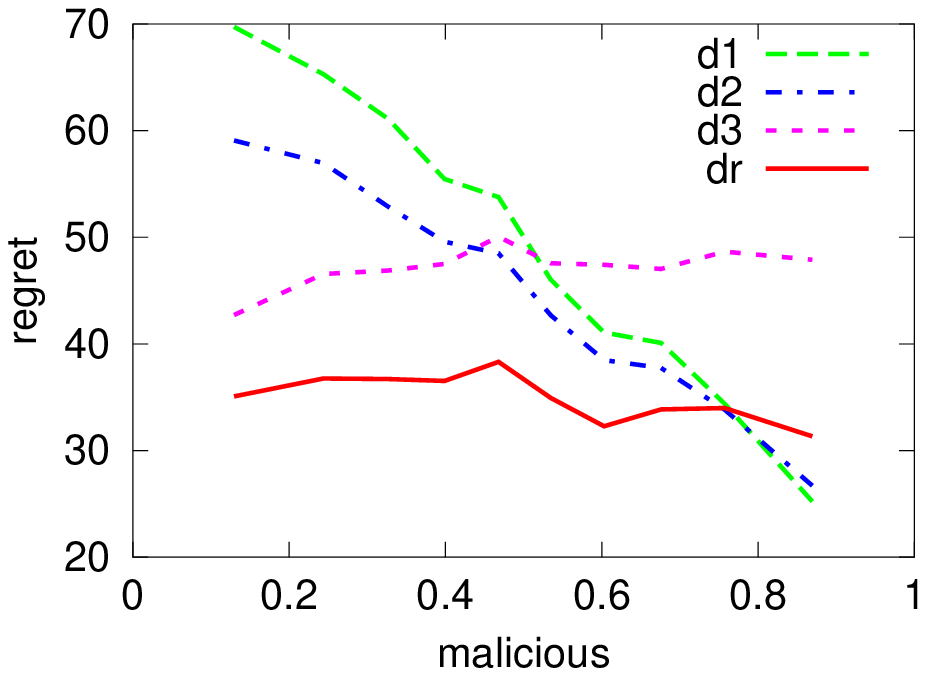}
  }
  \subfigure[Gain]{
    \psfrag{regret}[B][B][1][0]{$\Ex L$}
    \psfrag{d1}[B][B][1][0]{$\delta_1$}
    \psfrag{d2}[B][B][1][0]{$\delta_2$}
    \psfrag{d3}[B][B][1][0]{$\delta_3$}
    \psfrag{dr}[B][B][1][0]{$\delta^*$}
    \psfrag{payment}[B][B][1][0]{$\GU$}
    \includegraphics[width=0.45\textwidth]{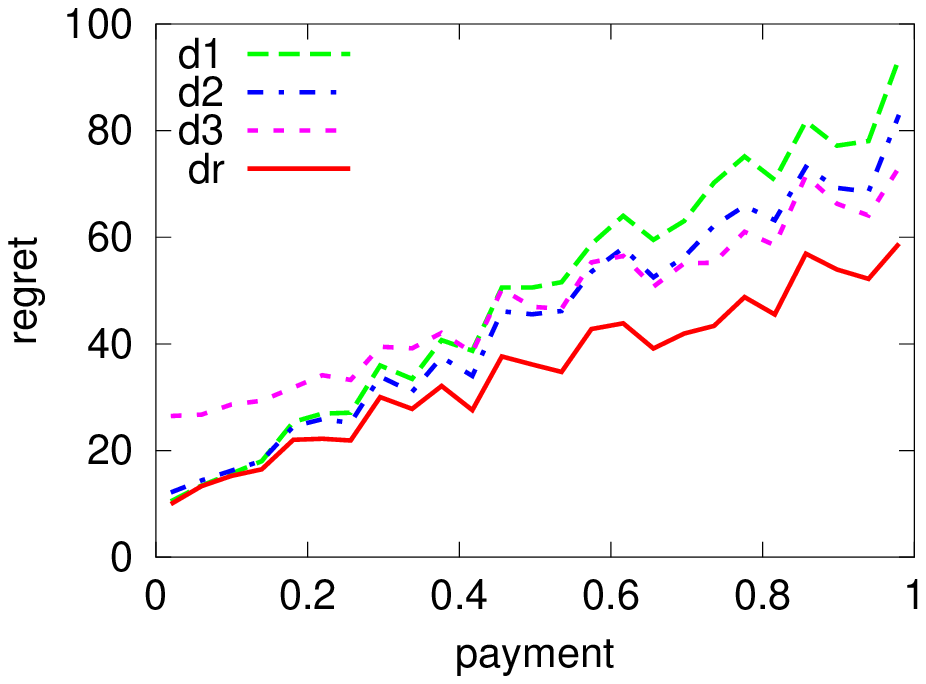}
  }
  \caption{Simulations with Alg.~1, for four different choices of $\delta$. In particular $\delta_1 = 0.9$, $\delta_2 = 0.95$, $\delta_3 = 0.99$ and $\delta^*$ is chosen according to Theorem~1. It can be seen that, while the algorithm is not extremely sensitive to the exact choice of $\delta$, the optimal value is generally more robust.}
  \label{fig:simulations}
\end{figure*}
Specifically, the first results we report (i.e. Fig. \ref{fig:simulations})  are made through $10^4$ experiments. For each experiment, we selected a horizon $\Ho \sim Uniform([10,1000])$, user and adversary parameters $u, q \sim Uniform([0,1])$, and user gain $\GU \sim Uniform([0,1])$ and we set $\LQ = 1$. Each experiment measured the loss for a network containing 100 nodes, each of which had a probability $p$ of being malicious, with $p \sim Beta(2,2)$ for each experiment. During each run, the $i$-th node generates a sequence of observations $x_{i,t}$ drawn from a Bernoulli distribution with parameter $u$ if the node is honest and $q$ if the node is malicious. Figure~\ref{fig:simulations} shows a summary of the results, averaged over these trials. It can be seen that, while \HIPER{}'s performance is relatively robust to the choice of $\delta$, nevertheless the optimal choice suggested by Theorem~\ref{theorem:hiper} generally leads to small losses.
\begin{figure*}[ht]
  \centering
  \subfigure[Horizon]{
    \psfrag{dr}[r][r][0.8][0]{\HIPER{}}
    \psfrag{myopic}[r][r][0.8][0]{myopic}
    \psfrag{optimistic}[r][r][0.8][0]{optimistic}
    \psfrag{regret}[B][B][1][0]{$\Ex L$}
    \psfrag{horizon}[B][B][1][0]{$\Ho$}
    \includegraphics[width=0.45\textwidth]{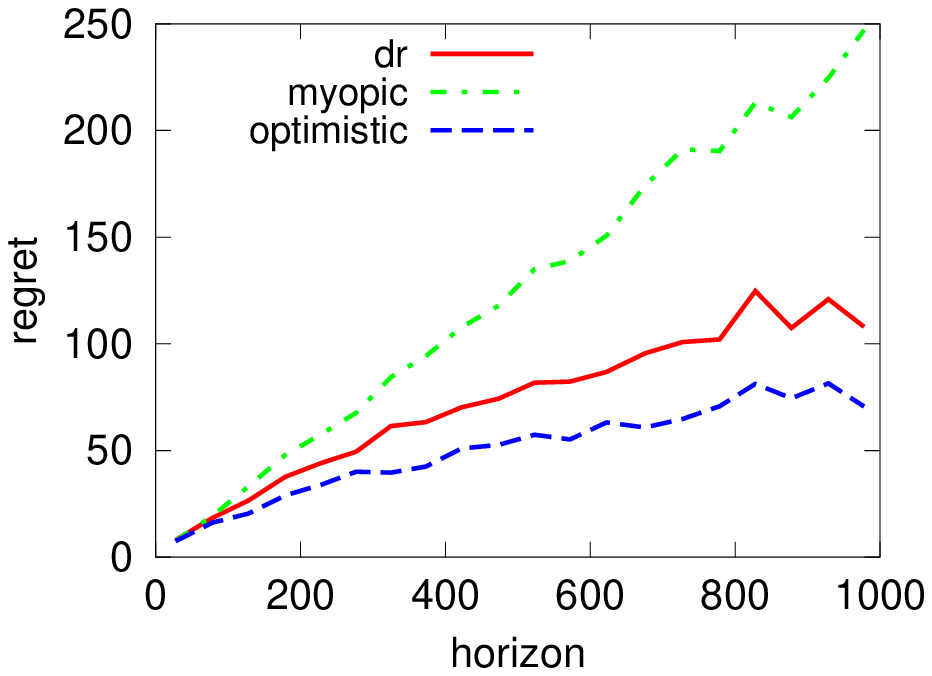}
  }
  \subfigure[Gap]{
    \psfrag{dr}[r][r][0.8][0]{\HIPER{}}
    \psfrag{myopic}[r][r][0.8][0]{myopic}
    \psfrag{optimistic}[r][r][0.8][0]{optimistic}
    \psfrag{regret}[B][B][1][0]{$\Ex L$}
    \psfrag{delta}[B][B][1][0]{$\Delta$}
    \includegraphics[width=0.45\textwidth]{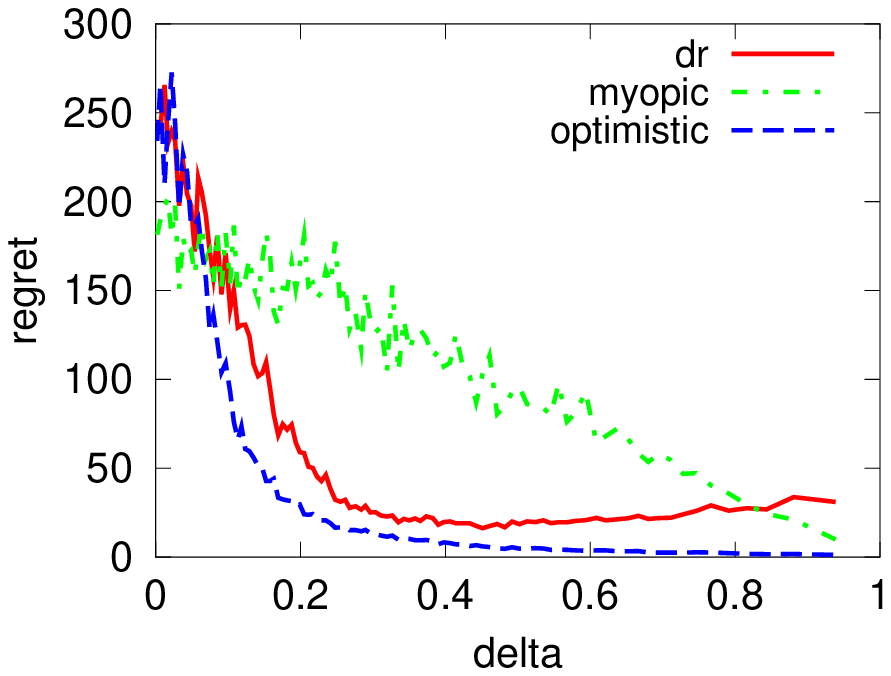}
  }
  \\
  \subfigure[Malicious]{
    \psfrag{dr}[r][r][0.8][0]{\HIPER{}}
    \psfrag{myopic}[r][r][0.8][0]{myopic}
    \psfrag{optimistic}[r][r][0.8][0]{optimistic}
    \psfrag{regret}[B][B][1][0]{$\Ex L$}
    \psfrag{malicious}[B][B][1][0]{proportion of malicious nodes}
    \includegraphics[width=0.45\textwidth]{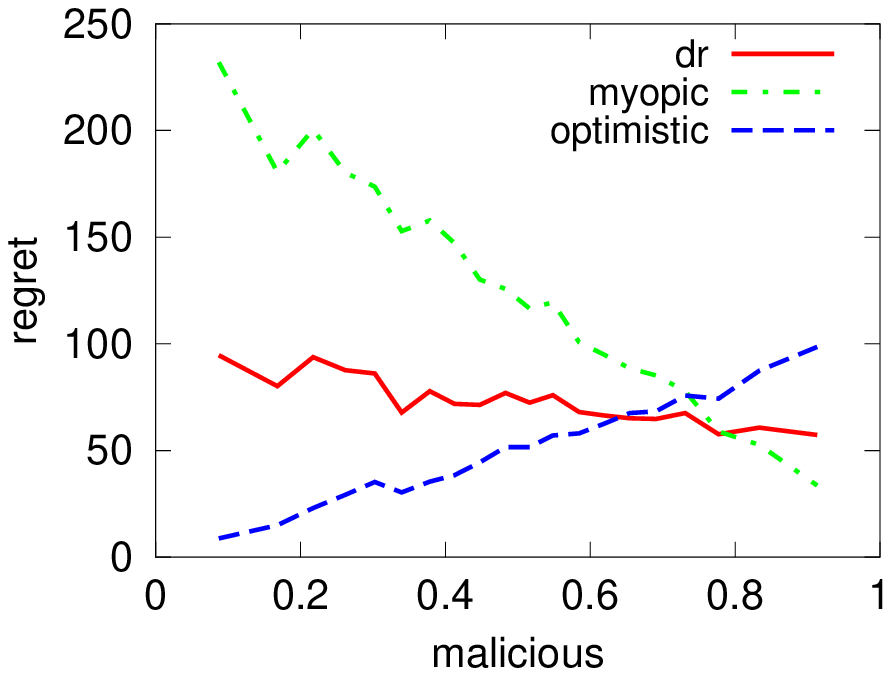}
  }
  \subfigure[Gain]{
    \psfrag{dr}[r][r][0.8][0]{\HIPER{}}
    \psfrag{myopic}[r][r][0.8][0]{myopic}
    \psfrag{optimistic}[r][r][0.8][0]{optimistic}
    \psfrag{regret}[B][B][1][0]{$\Ex L$}
    \psfrag{payment}[B][B][1][0]{$\GU$}
    \includegraphics[width=0.45\textwidth]{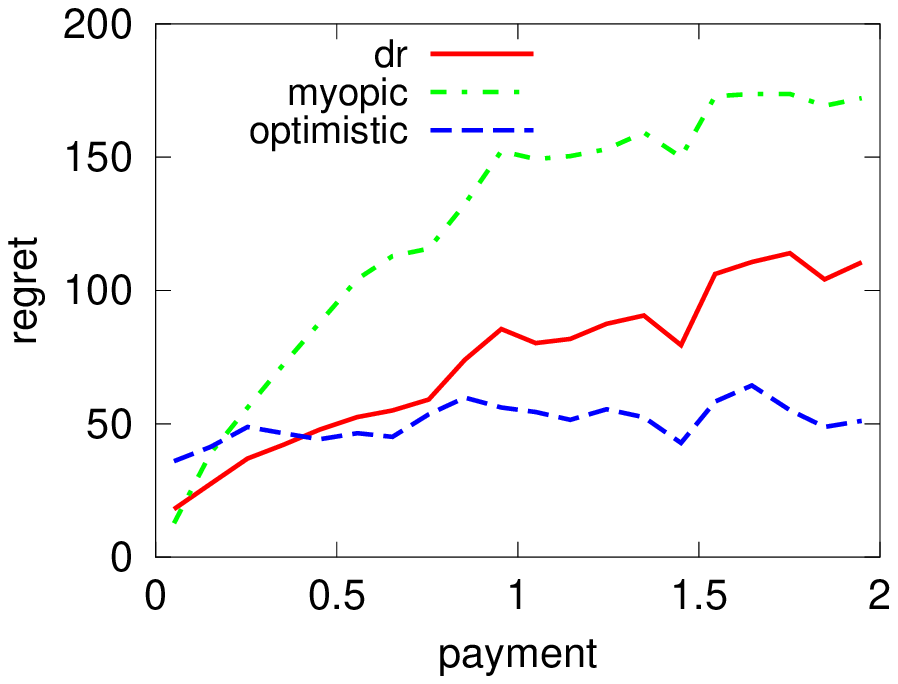}
  }
  \caption{Comparison of \HIPER{} with the {\em myopic} solver and the {\em optimistic} approximation for various network conditions. It can be clearly seen that the {\em myopic} approximation is significantly worse than both approaches. However, the {\em optimistic} approach outperforms the worst-case \HIPER{} algorithm when the proportion of malicious nodes is low. The {\em optimistic} approach is also better when the payment for honest nodes is high.}
  \label{fig:pomdp_simulations}
\end{figure*}

For our second set of experiments, shown in Figure~\ref{fig:pomdp_simulations}, we compare \HIPER{} with the {\em optimistic} and {\em myopic}  algorithms. We increased the range of user gains to $\GU \sim Uniform([0,2])$ compared to the previous setup, but the other experimental parameters remain the same. It is clear that the {\em myopic} approximation has almost always a higher loss compared to both \HIPER{} and the {\em optimistic} algorithm. The latter, while performing at a similar level to \HIPER{}, has an advantage when either the proportion of malicious is small or when $\GU$ is large. This makes sense intuitively, since in those cases the optimism is justified. In the converse case, however, the {\em optimistic} approach performs worse than \HIPER{}, which is less sensitive to the proportion of malicious nodes, since it is a worst-case approach.
\begin{figure*}[ht]
  \centering
  \subfigure[Horizon]{
    \psfrag{optimistic}[r][r][0.8][0]{optimistic}
    \psfrag{finite}[r][r][0.8][0]{$T=4$}
    \psfrag{finite2}[r][r][0.8][0]{$T=8$}
    \psfrag{regret}[B][B][1][0]{$\Ex L$}
    \psfrag{horizon}[B][B][1][0]{$\Ho$}
    \includegraphics[width=0.45\textwidth]{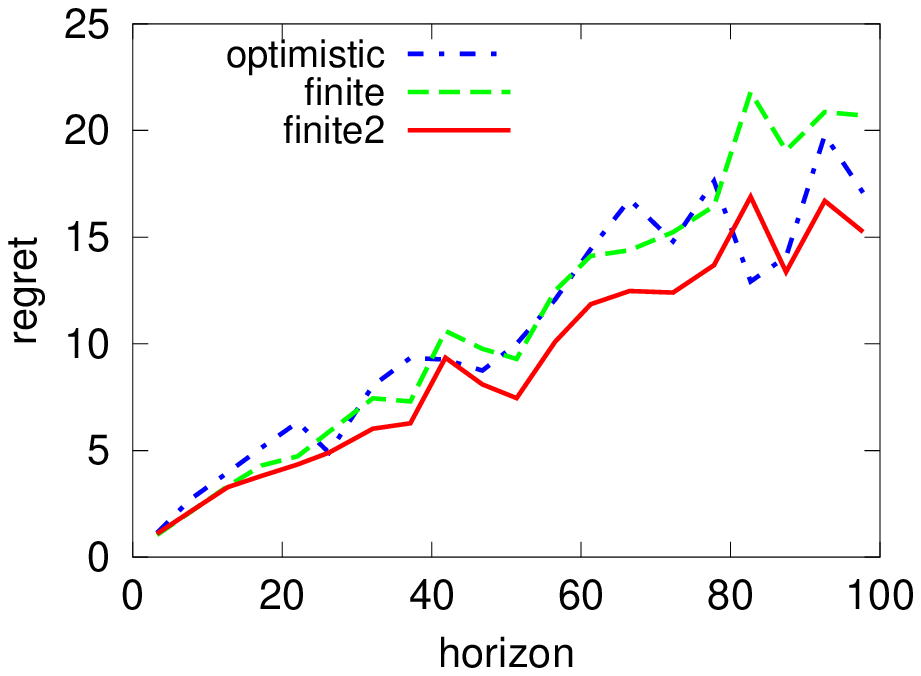}
  }
  \subfigure[Gap]{
    \psfrag{optimistic}[r][r][0.8][0]{optimistic}
    \psfrag{finite}[r][r][0.8][0]{$T=4$}
    \psfrag{finite2}[r][r][0.8][0]{$T=8$}
    \psfrag{regret}[B][B][1][0]{$\Ex L$}
    \psfrag{delta}[B][B][1][0]{$\Delta$}
    \includegraphics[width=0.45\textwidth]{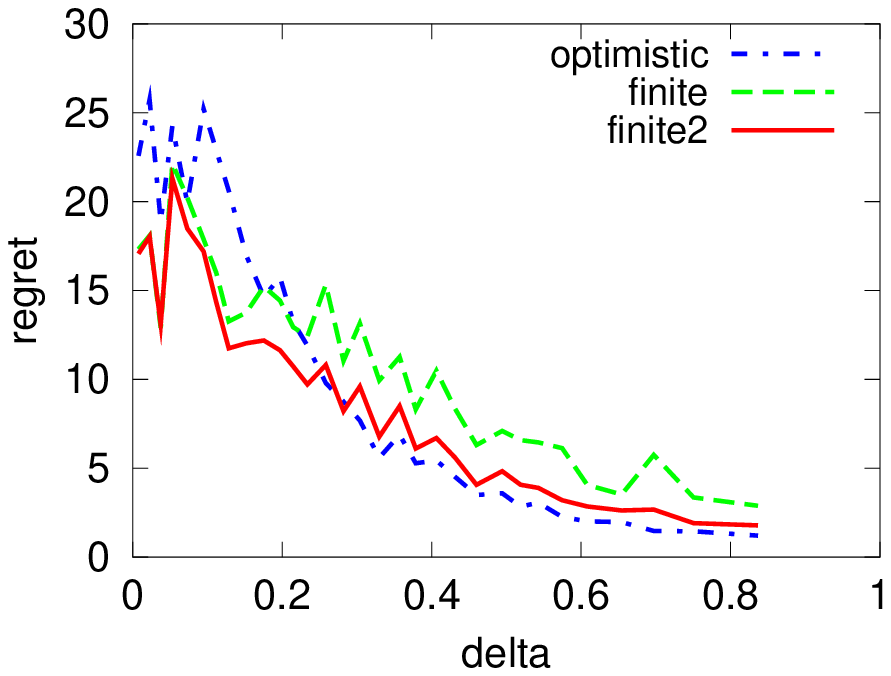}
  }
  \\
  \subfigure[Malicious]{
    \psfrag{optimistic}[r][r][0.8][0]{optimistic}
    \psfrag{finite}[r][r][0.8][0]{$T=4$}
    \psfrag{finite2}[r][r][0.8][0]{$T=8$}
    \psfrag{regret}[B][B][1][0]{$\Ex L$}
    \psfrag{malicious}[B][B][1][0]{proportion of malicious nodes}
    \includegraphics[width=0.45\textwidth]{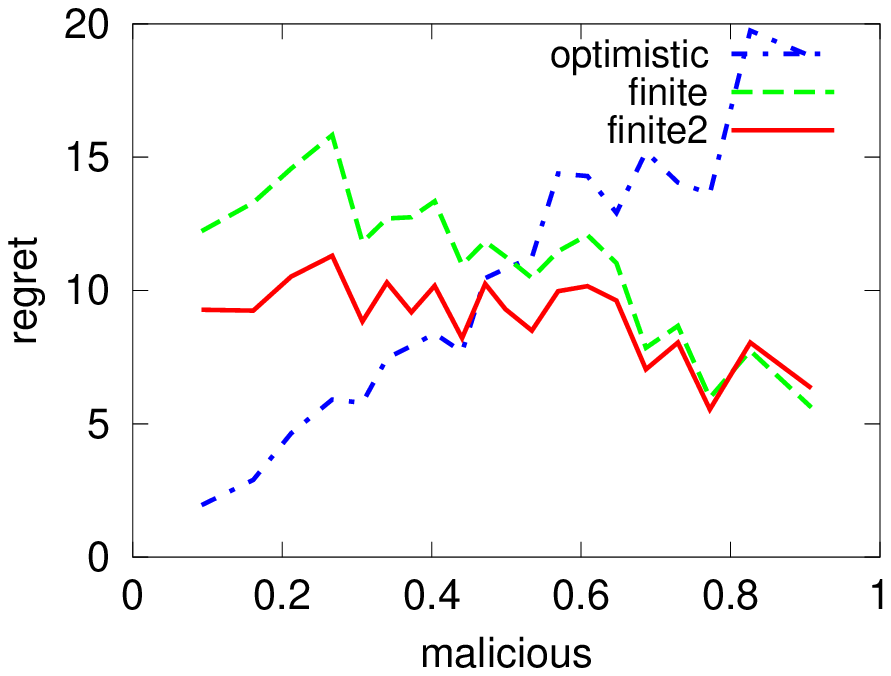}
  }
  \subfigure[Gain]{
    \psfrag{optimistic}[r][r][0.8][0]{optimistic}
    \psfrag{finite}[r][r][0.8][0]{$T=4$}
    \psfrag{finite2}[r][r][0.8][0]{$T=8$}
    \psfrag{regret}[B][B][1][0]{$\Ex L$}
    \psfrag{payment}[B][B][1][0]{$\GU$}
    \includegraphics[width=0.45\textwidth]{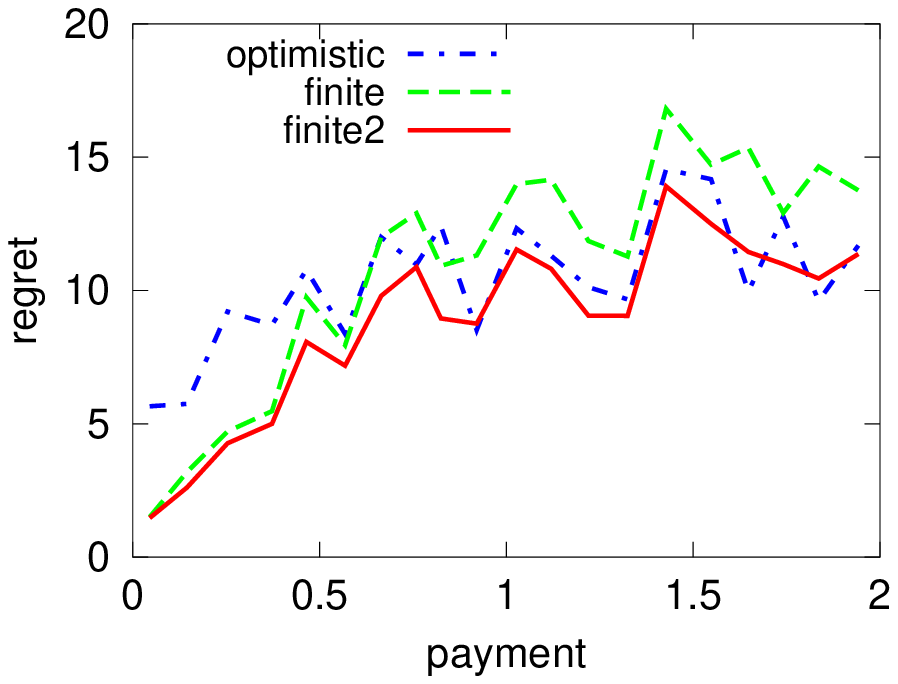}
  }
  \caption{Comparison of the optimistic approximation with approximate {\em non-myopic} POMDP solvers for planning lookahead of $T$ time-steps where $T \in \{4, 8\}$. It can be seen that, for short horizons, these perform just as well  and that they are more robust to the proportion of malicious nodes in the network. However, these methods are computationally more intensive, with complexity $O(e^T)$.}
  \label{fig:fh_simulations}
\end{figure*}

Finally, we performed some experiments comparing the {\em optimistic} approximation with the {\em finite-lookahead} POMDP solvers for lookahead for $T$ time-steps where $T \in \{4, 8\}$. While these do not solve the problem to the end of the horizon $H$, they plan ahead for $T$ steps at every time-step of the simulation. Unfortunately, the complexity of these solvers is exponential in $T$, which limited the amount of simulations we could perform to $10^3$ and we only considered horizons $\Ho \sim Uniform([1,100])$. These experiments are shown in Fig.~\ref{fig:fh_simulations}. In comparison with Fig.~\ref{fig:pomdp_simulations}, the {\em finite lookahead} algorithms performs much better than the {\em myopic} approximation and indeed the $8$-step lookahead manages to slightly outperform the {\em optimistic} approximation. In addition, it is much more robust to the proportion of malicious nodes in the network. However, the relative advantage of the 8-step to the 4-step lookahead is relatively small for the amount of extra computation required.\footnote{The computational effort is exponential in $T$.}

\section{Conclusion}
\label{sec:conclusion}

This paper defined a resource management problem that arises frequently in communication networks. Namely, whether to remove a suspicious node from the system, with the amount of available evidence, or to collect some further data before taking the final decision. This is in fact a type of stopping problem, which we believe is of relevance to many applications where blacklisting may be performed. This includes applications such as automated intrusion response, as well as ensuring fairness in peer-to-peer networks, such as~\cite{vieira2009fighting}. To this end, we proposed and analysed, both theoretically and experimentally, an efficient algorithm, \HIPER{}, that achieves low {\em worst-case expected loss} relative to an oracle that knows {\em a priori} the type (honest or malicious) of every node in the system. In addition, we derived and compared a number of algorithms by modelling the problem as a POMDP: a {\em myopic} and an {\em optimistic} approximation, as well as a {\em finite lookahead} solver. Of those, the {\em optimistic} approximation and the partial {\em finite lookahead} solvers perform the best, with the {\em finite lookahead} methods being the most robust, while simultaneously being computationally demanding. 

The main advantage of \HIPER{} are its simplicity and lack of stringent assumptions on the distribution. This makes it suitable for deployment in most situations. However, whenever a full probabilistic model and computational resources are available, one of the approximate solvers would be useful.
The overall best performance is offered by the {\em finite lookahead}, closely followed by the {\em optimistic} approximation. The {\em myopic} approximation, which is equivalent to the widely-used ``most likely state'' (MLS) approximation, is the worst. To our knowledge, neither the {\em optimistic} approximation, nor the {\em finite lookahead} methods have been applied before to this problem or more generally to intrusion response problems. They should be more generally applicable for other types of intrusion response and resource management problems. It is our view that they are inherently more suitable than other approximations such as the commonly used (MLS) approximation (or equivalently, a sequential probability ratio test) which in our setting produces an essentially random policy.

For future work, we would like to extend our theoretical analysis to the performance of the  {\em optimistic} and the {\em finite lookahead} algorithms. In addition, it would be interesting to examine a more general game-theoretic scenarios, including strategic attackers~\cite{bao-sec-class-game,dritsoula-game-class-game}. Finally, we would like to generalise our setting so that observations must be {\em explicitly} gathered from each node, where it is not possible to continuously sample all nodes due to budget constraints. In fact, the sampling problem in the context of intrusion detection, has been recently studied by~\cite{Liu:DID,bu2011structural}. A natural extension of our work would consequently be to optimally combine sampling and response policies.

\
\appendix
\section{Proofs}
\label{app:proofs}
This section collects the missing proofs from the main text.
\begin{proof}[(Lemma~\ref{lem:lossattacker})]
Since the node $i$ under consideration is malicious, i.e. $i\in \Q$, it holds that: $\Ex[x_{i,t}\mid \Q]=q$.
Then, we have:
\begin{equation*}
\Ex[\theta_t\mid \Q]=\Ex\bigg[\frac{1}{t}\cdot \sum_{k=1}^{t} x_{i,k} \, \bigg | \,  \Q \bigg]= \frac{1}{t} \sum_{k=1}^{t} \Ex[x_{i,k}\mid \Q]= \frac{1}{t}\cdot t \cdot q=q.
\end{equation*}
%
From Hoeffding's inequality ({\em Lemma} \ref{lem:hoeffding}, in the Appendix), we have:
%
\begin{equation}
\Pr\left(|\theta_t - q| > \epsilon_t\mid \Q \right)\leq 2 \exp(-2t\epsilon_t^2),
\label{eq:hoefdingStop}
\end{equation}
where $\epsilon_t>0$ and
$\Pr\left(|\theta_t - q| > \epsilon_t \mid \Q \right)$ denotes the probability that $\theta_t$ (which is random) is very far away from $q$ (which is fixed).
Now set:
$\epsilon_t=\sqrt{\frac{\ln(2/\delta)}{2t}}$
as in Algorithm 1. Then, since equation \ref{eq:hoefdingStop} holds for any $\epsilon_t>0$, we get that the probability of keeping a malicious node $i\in \Q$ in the network is at most $\delta$:
$\Pr \bigg( | \theta_t - q  | > \sqrt{\frac{\ln(2/\delta)} {2t}} \mid \Q \bigg) < \delta$.
Thus, we have:
$\Ex[L\mid \Q]=\Ex[N \mid \Q ] \cdot \ell_\Q=\sum_{t=0}^{\infty}\Pr(N=t \mid \Q) \cdot t \cdot \ell_\Q
\leq \ell_\Q \sum_{t=0}^{\infty} \delta^{t-1}\cdot t = \frac{\ell_\Q}{(1-\delta)^2} $
%
%
%
\end{proof}
\begin{proof}[(Proof of Lemma~\ref{lem:lossuser})]
  We denote by $N$ the time-step at which \E\ removes node $i$ from the network.
  Then, the function $g: \mathbb{N}^2\rightarrow \mathbb{R}$ that gives us the gain for each node $i$ is defined as:
    $g(n,h)\defn \min \{n,h\}\cdot \GU$
  where $h\in\Ho$ and $n\in N$.
  Since the node $i$ under consideration is honest, i.e. $i\in \U$, we have $\Ex[x_{i,t}\mid \U]=u$.
  Without loss of generality we assume that: $u=q+\Delta$, where $\Delta>0$. So we only need  $\Pr(\theta_t-q<\epsilon_t\mid \U)$.
  Since $q=u-\Delta$ from the Hoeffding inequality ({\em Lemma} \ref{lem:hoeffding}, in the Appendix), we have:
  %
  \begin{align*}
    \Pr(N=t\mid \U) \leq \Pr(N\leq t)&\leq  \Pr(\theta_t-u<\epsilon_t-\Delta\mid \U)\\&\leq \exp(-2\cdot t (\epsilon_t-\Delta)^2)
    \label{eq:UserN=T3}
  \end{align*}
  where $\Delta - \epsilon_t>0$.
  It holds that:
  \begin{align}
    &\Ex[G\mid \U, N=n]=\nonumber
    \\&\sum_{n=0}^{\infty}\Pr(\Ho=h\mid \U,N=n)\Ex[ G\mid \U,N=n,\Ho=h]
  \end{align}
  But it holds that:
  $\Ex[G\mid \U, N=n, \Ho]=g(n,h)$
  and since $h\in\Ho$ and $n\in N$ are independent we have:
 $ \Pr(\Ho=h\mid \U,N=n)=\Pr(\Ho=h\mid \U).$
  Thus,
  \begin{align}
    &\Ex[G\mid \U,N=n]=\sum_{h=0}^\infty \Pr(\Ho=h\mid \U)\cdot g(n,h)=\nonumber\\&\sum_{h=0}\Pr(\Ho=h\mid \U)\min\{n,h\} \cdot \GU=\nonumber\\
    &\GU \cdot \Big\{ \sum_{h=0}^{n-1} \Pr(\Ho=h\mid \U)\cdot h +  \sum_{h=n}^{\infty} \Pr(\Ho=h\mid \U)\cdot n\Big\}
  \end{align}
  The expected loss is given by subtracting from the expected gain of the oracle policy, when \E\ never removes the node from the network (i.e. $N=\infty$), the expected gain when \E\ removes the node at the time-step $N=n$. Thus, it holds:
  \begin{align}
    &\Ex[L\mid \U, N=n]=
    \Ex[G\mid \U,N=\infty] - \Ex[G \mid \U, N=n]=\nonumber\\
    &\lim_{n\rightarrow\infty}(\Ex[G\mid\U, N=n])-\Ex[G\mid \U, N=n]=\nonumber\\
    &\GU \sum_{h=0}^{\infty} \Pr(\Ho=h) h-\GU \Big\{  \sum_{h=0}^{n-1} \Pr(\Ho=h) h + \sum_{h=n}^{\infty} \Pr(\Ho=h) n  \Big\}   \nonumber\\
    &=\GU \Big\{\sum_{h=n}^{\infty}\Pr(\Ho=h) h -\sum_{h=n}^{\infty} \Pr(\Ho=h) n \Big\}
  \end{align}
  Since, by definition $\Pr(\Ho=h+1\mid \Ho>h )=\lambda$, we have
  $\Pr(\Ho=h)= (1-\lambda)^{h-1} \lambda$. Consequently,
  \begin{align}
    \Ex[L\mid \U, N=n]=& \GU \lambda \bigg( \sum_{h=n}^{\infty}(1-\lambda)^{h-1}  h - \sum_{h=n}^\infty (1-\lambda)^{h-1} n\bigg )\nonumber\\
    =& \GU    \frac {(1-\lambda)^n} {\lambda}
  \end{align}
  Thus, we have:
  $  \Ex[L\mid \U]= \sum_{t=0}^{\infty}\Pr(N=t\mid \U) \Ex[L|N=t]\nonumber\\
    \leq \sum_{t=0}^\infty \exp(-2\cdot t\cdot (\epsilon_t -\Delta)^2)\cdot \GU \frac {(1-\lambda)^t} {\lambda}$
  Since the algorithm uses $\epsilon_t=\frac{\Delta} {\sqrt{t}}$, we have:    
  \begin{align}
    E[L\mid \U] \leq& \frac {\GU} {\lambda}\sum_{t=0}^\infty \exp\left(-2\cdot t \left[\frac {\Delta} {\sqrt{t}} -\Delta\right]^2  \right)(1-\lambda)^t\nonumber \\
    &= \frac {\GU} {\lambda}\sum_{t=0}^\infty \exp \left(-2 \Delta^2 (\sqrt{t}-1)^2\right)(1-\lambda)^t\nonumber \\
    &\leq \frac {\GU} {\lambda}\sum_{t=0}^\infty \exp \left(-2 \Delta^2 \left[\sqrt{t}-\sqrt{\frac {t} {2}}\right]^2\right)(1-\lambda)^t \nonumber\\
    &= \frac  {\GU} {\lambda}\sum_{t=0}^\infty \left [\exp \left( -\frac {\Delta^2}{2}\right)(1-\lambda)\right]^t\nonumber \\
    &=   \frac {\GU} {\left[1- \exp \left( -\frac {\Delta^2}{2}\right)(1-\lambda)\right]\lambda}\nonumber \leq \frac {\GU(\Delta^2+2)} {\lambda (\Delta^2+2\lambda)} 
  \end{align}         
  where $t\geq2$.
\end{proof}

\section{Additional results}
\label{app:additional}

\begin{definition}[Bernoulli distribution]
  If $X_1, \ldots, X_n$ are independent Bernoulli random variables
  with $X_k \in \set{0, 1}$ and $\Pr(X_k = 1) = \mu$ for all $k$, then
  \begin{equation}
    \label{eq:binomial}
    \Pr\left(\sum_{k=1}^n X_k \geq u\right)=
    \sum_{k=0}^{u} \binom{n}{k} \mu^k (1 - \mu)^{n-k}.
  \end{equation}
\end{definition}
\label{sec:Hoeffding}
\begin{lemma}[Hoeffding]
  For independent random variables $X_1, \ldots, X_n$ such that
  $X_i \in [a_i, b_i]$, with  $\mu_i \defn  \Ex X_i$ and $t>0$:
  \begin{align*}
    \Pr\left(
      \sum_{i=1}^n X_i \geq \sum_{i=1}^n \mu_i + n t
    \right) 
    &\leq
    \exp\left(
      -\frac{2n^2t^2}{\sum_{i=1}^n (b_i - a_i)^2}
    \right).
  \end{align*}
  \label{lem:hoeffding}
  The same in equality holds for $\sum_{i=1}^n X_i \leq \sum_{i=1}^n \mu_i - n t$.
\end{lemma}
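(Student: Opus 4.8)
The plan is to use the standard Chernoff--Cram\'er exponential moment method. First I would fix $s > 0$ and apply Markov's inequality to the exponentiated deviation: writing $S_n \defn \sum_{i=1}^n X_i$, the event $\{S_n - \Ex S_n \geq nt\}$ coincides with $\{e^{s(S_n - \Ex S_n)} \geq e^{snt}\}$, so Markov gives
\[
\Pr\left(S_n - \Ex S_n \geq nt\right) \leq e^{-snt}\, \Ex\!\left[e^{s(S_n - \Ex S_n)}\right].
\]
Because the $X_i$ are independent, the moment generating function of the sum factorises, $\Ex[e^{s(S_n - \Ex S_n)}] = \prod_{i=1}^n \Ex[e^{s(X_i - \mu_i)}]$, reducing the problem to controlling each single-variable factor.

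The technical heart --- and the step I expect to be the main obstacle --- is the pointwise bound on each factor, i.e.\ Hoeffding's lemma: for a mean-zero random variable $Y \in [a,b]$ one has $\Ex[e^{sY}] \leq \exp(s^2(b-a)^2/8)$. I would prove this by convexity: since $y \mapsto e^{sy}$ is convex, on $[a,b]$ it lies below the secant line, so $e^{sy} \leq \frac{b-y}{b-a}e^{sa} + \frac{y-a}{b-a}e^{sb}$; taking expectations and using $\Ex Y = 0$ turns the right-hand side into $\exp(\psi(s))$ for a function $\psi$ depending only on $a$, $b$, $s$. A short Taylor argument with Lagrange remainder then shows $\psi''(s) \leq (b-a)^2/4$ uniformly, and since $\psi(0) = \psi'(0) = 0$ this yields $\psi(s) \leq s^2(b-a)^2/8$, exactly the claimed bound. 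Applying this with $Y = X_i - \mu_i$, so that the range width is $b_i - a_i$, bounds each factor.

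Assembling the pieces gives, for every $s > 0$,
\[
\Pr\left(S_n - \Ex S_n \geq nt\right) \leq \exp\!\left(-snt + \frac{s^2}{8}\sum_{i=1}^n (b_i - a_i)^2\right).
\]
The final step is to optimise the free parameter: the exponent is a quadratic in $s$ minimised at $s^\star = 4nt / \sum_{i=1}^n (b_i - a_i)^2$, and substituting this value collapses the bound to $\exp\!\left(-2n^2t^2 / \sum_{i=1}^n (b_i - a_i)^2\right)$, as asserted. The lower-tail statement then follows by applying the upper-tail result verbatim to the variables $-X_1, \ldots, -X_n$, whose means are $-\mu_i$ and whose ranges $[-b_i, -a_i]$ have the same widths $b_i - a_i$.
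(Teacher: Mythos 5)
Your proof is correct, but there is nothing in the paper to compare it against: the paper does not prove this lemma at all. It is stated in the appendix as a classical auxiliary result (Hoeffding's inequality) and invoked as a black box in the proofs of Lemmas~\ref{lem:lossattacker} and~\ref{lem:lossuser}. What you have supplied is the canonical Chernoff--Hoeffding argument, and every step checks out: Markov's inequality applied to $e^{s(S_n - \Ex S_n)}$ (the events coincide since the exponential is strictly increasing); factorisation of the moment generating function by independence; Hoeffding's lemma $\Ex[e^{sY}] \leq \exp\bigl(s^2(b-a)^2/8\bigr)$ for mean-zero $Y \in [a,b]$, which your convexity-plus-Taylor sketch establishes in the standard way (the second derivative of the log-moment generating function is the variance of a two-point distribution on $\{a,b\}$, hence at most $(b-a)^2/4$); and the optimisation $s^\star = 4nt/\sum_{i=1}^n (b_i-a_i)^2$, which indeed collapses the exponent to $-2n^2t^2/\sum_{i=1}^n (b_i-a_i)^2$. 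The application to $Y = X_i - \mu_i$ is legitimate because centering preserves the range width $b_i - a_i$, and the lower tail follows by negation exactly as you say. In short, your proposal gives a complete and correct proof of a statement the paper only cites.
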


\bibliographystyle{abbrv}
\bibliography{response}


\end{document}